\definecolor{sperblue}{RGB}{70, 130, 180}
\definecolor{spergreen}{RGB}{50, 205, 50}
\definecolor{sperred}{RGB}{220, 20, 60}
\definecolor{spergray}{RGB}{200, 200, 200}
\newtheorem{Problem}{problem}
\begin{document}
\title{SPER: Accelerating Progressive Entity Resolution via Stochastic Bipartite Maximization}

\author{Dimitrios Karapiperis}
\orcid{0000-0002-3878-5988}
\affiliation{%
  \institution{International Hellenic University}
  \city{Thessaloniki}
  \country{Greece}
}
\email{dkarapiperis@ihu.edu.gr}

\author{George Papadakis}
\orcid{0000-0002-7298-9431}
\affiliation{%
  \institution{National and Kapodistrian University of Athens}
  \city{Athens}
  \country{Greece}
}
\email{gpapadis@di.uoa.gr}

\author{Vassilios S. Verykios}
\orcid{0000-0002-9758-0819}
\affiliation{%
  \institution{Hellenic Open University}
  \city{Patras}
  \country{Greece}
}
\email{verykios@eap.gr}

\begin{abstract}
Entity Resolution (ER) is a critical data cleaning task for identifying records that refer to the same real-world entity. In the era of Big Data, traditional batch ER is often infeasible due to volume and velocity constraints, necessitating Progressive ER methods that maximize recall within a limited computational budget. However, existing progressive approaches fail to scale to high-velocity streams because they rely on deterministic sorting to prioritize candidate pairs, a process that incurs prohibitive super-linear complexity and heavy initialization costs. To address this scalability wall, we introduce SPER (Stochastic Progressive ER), a novel framework that redefines prioritization as a sampling problem rather than a ranking problem. By replacing global sorting with a continuous stochastic bipartite maximization strategy, SPER acts as a probabilistic high-pass filter that selects high-utility pairs in strictly linear time. Extensive experiments on eight real-world datasets demonstrate that SPER achieves significant speedups ($3\times$ to $6\times$) over state-of-the-art baselines while maintaining comparable recall and precision.
\end{abstract}

\maketitle



\section{Introduction}

Entity Resolution (ER) is the critical data cleaning task of identifying and linking database records that refer to the same real-world entity, such as \textit{IBM} and \textit{International Business Machines} \cite{christ_survey, pap_survey}. Typically, 
ER involves a \textit{blocking} or \textit{indexing} phase that efficiently retrieves candidate pairs, followed by a computationally intensive \textit{matching} phase that verifies whether they refer to the same entity \cite{peterbook}. In the era of Big Data, where Volume and Velocity are paramount, the traditional \textit{batch} ER process is increasingly infeasible, as it requires processing the entire dataset before producing any output \cite{christ_survey}. For time-sensitive applications, such as real-time fraud detection, 
waiting hours for a complete resolution is not an option.

To address this latency, \textit{Progressive ER} redefines the objective: instead of maximizing total recall at the end of a long batch process, it aims to maximize recall \textit{early}, within a limited time or computational budget \cite{Altowim2014Progressive}. Consider a disaster response scenario where thousands of social media reports stream in every minute. A progressive system prioritizes highly probable matches (e.g., exact geolocation overlaps) to identify major clusters immediately, leaving ambiguous fuzzy matches for later refinement if time permits.

Financial crime detection and high-velocity e-commerce exemplify critical domains where the utility of ER decays rapidly, necessitating a \textit{pay-as-you-go} paradigm \cite{Whang2013PayAsYouGo}. In anti-money laundering systems, detecting illicit activity immediately upon data arrival allows investigators to intervene before funds are moved, whereas traditional batch processes that run overnight are often too late to prevent fraud rings that operate across vast transaction logs. Similarly, large online retailers that continuously ingest product data cannot afford to wait for full batch deduplication. Progressive ER addresses this by prioritizing obvious matches like exact UPC links to make the inventory available for sale right away, while processing ambiguous cases in the background.

While state-of-the-art progressive ER methods successfully prioritize matching pairs, they fundamentally fail to scale to large volumes of data, due to a shared algorithmic flaw: their reliance on \textit{deterministic sorting}. 
The main progressive ER frameworks introduced in~\cite{Papenbrock2015Progressive, firmani, Gazzarri2023Incremental, Maciejewski2025DesignSpace,  ondemand, Simonini2019SchemaAgnostic, Galhotra2021Efficient, gsm} depend on heavy initialization phases to strictly rank entities or blocks. For example, the methods in~\cite{Gazzarri2023Incremental, Simonini2019SchemaAgnostic} suffer from: (1) a massive initialization cost because they must construct a meta-blocking graph~\cite{pap1, pap2} and calculate the duplication likelihood for every node pair, and (2) a sorting bottleneck, as they must sort these likelihoods to find possible matches. This requirement imposes a prohibitively high cost in the sense that these methods typically incur hours of initialization latency on large datasets before emitting a single result. Similarly, the latest work in the field 
sorts all candidate pairs (edges) in a similarity graph to define the processing order \cite{Maciejewski2025DesignSpace}. These approaches scale super-linearly relative to the number of candidate edges/pairs $\mathcal{E}$: $\mathcal{O}(|\mathcal{E}|\log|\mathcal{E}|)$. In a similar vein, \textit{pBlocking}~\cite{Galhotra2021Efficient} involves an iterative feedback loop: it processes a batch of pairs, it pauses to collect matching results, and it subsequently updates block statistics to restructure the hierarchy. This constant re-evaluation and sorting of blocks to identify the cleanest candidates for the next iteration results in a complexity of $O(n \log^2 n)$ per round, where $n$ is the total number of records.

To address the high initialization cost of Progressive ER, we propose \textit{SPER}, a high-velocity framework that integrates semantic embeddings with a continuous stochastic bipartite maximization strategy to prioritize candidate pairs. By fundamentally redefining prioritization as a weighted sampling problem rather than a global ranking one, our approach shifts the paradigm from deterministic ordering to probabilistic filtering. This enables the immediate, linear-time identification of high-confidence matches with utility statistically equivalent to the optimal ranking, yet without any initialization latency. Even though SPER might not pick the exact same list of pairs as global sorting, there is no significant impact on effectiveness, as the total similarity weight of the selected pairs is equally high. We experimentally demonstrate the insignificant (if any) impact on effectiveness along with the significant gains in time efficiency through an extensive experimental study that involves 8 datasets commonly used in the literature.

More specifically, our approach conveys the following key contributions:
\begin{itemize}[leftmargin=*]
    \item We introduce a novel paradigm for Progressive ER, which replaces the deterministic sorting during the initialization phase with a stochastic process that efficiently retrieves the top-weighted candidate pairs by dynamically scaling selection probabilities to adhere to a strict budget constraint.
    \item We theoretically prove that our stochastic relaxation efficiently retrieves high-weight candidates in \textit{linear} time, bypassing the expensive sorting bottleneck while maintaining theoretical utility guarantees. As a result, it significantly reduces the prioritization complexity compared to the super-linear costs of existing progressive ER methods.
    \item We perform an extensive experimental evaluation that involves eight benchmark datasets. The experimental results demonstrate that SPER eliminates initialization latency entirely, while delivering major gains in time efficiency, reducing the overall run-times from 4$\times$ to $>$6$\times$ across diverse data scales and schema heterogeneities when compared to the state-of-the-art in the field. This is achieved without sacrificing effectiveness, as the cumulative recall and precision are comparable (and at times higher) than the state-of-the-art in the field.
\end{itemize}
The remainder of this paper is organized as follows. Section \ref{sec:related} reviews the major existing work in Progressive ER. Section \ref{sec:problem} formally defines the problem of scalable utility maximization on bipartite graphs. Section \ref{sec:sper} introduces the SPER framework, detailing the stochastic bipartite maximization strategy and providing the theoretical proof of its convergence to the expected utility of the optimal baseline. Section \ref{sec:evaluation} presents the comprehensive experimental evaluation, comparing SPER against state-of-the-art baselines on eight real-world datasets. Finally, Section \ref{sec:conclusions} summarizes our key findings and concludes the work.

\section{Related Work}
\label{sec:related}
While ER encompasses various specialized settings described in these surveys~\cite{christ_survey, pap_survey}, research has increasingly prioritized progressive~\cite{Whang2013PayAsYouGo, Altowim2014Progressive, Papenbrock2015Progressive, firmani, Simonini2019SchemaAgnostic, Gazzarri2023Incremental, Galhotra2021Efficient, ondemand, kar-bigdata2021, gsm} and online~\cite{ioannou1,altwaijry, gazzari1, araujo1, gruenheid1,kar-bigdata2020, kar-pvldb, tkde2024} architectures to address the demands of time-sensitive applications.

The concept of \textit{pay-as-you-go} ER was pioneered by Whang et al.~\cite{Whang2013PayAsYouGo}, who proposed maximizing the \textit{early quantity} of detected matches when the computational resources are insufficient to process the entire dataset. To prioritize the pairwise comparisons that are most likely to involve duplicate entities, the proposed solutions leverage heuristics called \textit{hints}, such as sorted lists of record pairs or hierarchies of partitions. Building on this, Altowim et al.~\cite{Altowim2014Progressive} extended progressive techniques to \textit{Relational} ER by dynamically generating resolution plans that rely on cost-benefit models to prioritize decisions with the highest propagation impact. Papenbrock et al.~\cite{Papenbrock2015Progressive} 
introduced dynamic algorithms that iteratively increase sorting windows or process hierarchical blocks based on the latest detected matches. 
All these approaches 
involve sorting heuristics or heavy hierarchical structures that incur high initialization costs of superlinear time complexity. Hence, they struggle to scale to voluminous, high-velocity data, as the overhead of their deterministic ranking of candidate pairs creates a severe bottleneck.

To address the inapplicability of schema-based blocking in heterogeneous big data,
Simonini et al.~\cite{Simonini2019SchemaAgnostic} introduced a taxonomy of progressive methods and developed algorithms like \textit{PPS} that leverage a blocking graph to prioritize entities without schema knowledge, based exclusively on block co-occurrence patterns. Gagliardelli et al.~\cite{gsm} extended this framework by replacing the heuristic weights with probabilistic classification scores.
Both approaches, though, suffer from high initialization latency: they involve a time-consuming pre-processing phase that  constructs the full blocking graph and then performs global sorting operations to rank entities (or blocks) before emitting the top-weighted pairs. 

To address the trade-off between aggressive and permissive blocking, Galhotra et al.~\cite{Galhotra2021Efficient} proposed \textit{pBlocking}, which implements feedback-driven methodology. Unlike static strategies, which produce a processing order that is independent of detected matches, pBlocking creates a loop where partial ER results refine the processing order in real-time. However, this iterative refinement introduces a \textit{stop-and-wait} bottleneck, as the system must pause to re-rank block collections based on updated scores after every feedback loop, preventing true streaming throughput.

On another line of research, Sun et al.~\cite{Sun2022CostBenefit} proposed \textit{EPEM}, which handles datasets that exceed the capacity of the main memory by using a cost-benefit model to schedule data partitions between disk and memory. On the downside, this approach incurs a significant pre-processing overhead, as it relies on a coarse clustering phase that requires sorting all records based on cumulative similarity. This yields a super-linear cost, while the reliance on disk I/O and the NP-Complete complexity of its partition scheduling logic introduce latency bottlenecks that prevent true real-time processing.

\textit{BrewER}~\cite{ondemand} introduces a query-driven progressive ER framework that prioritizes the resolution of entities that satisfy specific SQL queries (e.g., ORDER BY). While effective for top-$k$ retrieval, it inherently depends on maintaining a global priority queue to enforce a deterministic emission order. This imposes a heap management overhead and \textit{head-of-line} blocking, as the top entity must be fully resolved before emission. These constraints create latency bottlenecks that limit its scalability for general-purpose, high-velocity settings.

Addressing data velocity, \textit{PIER} \cite{Gazzarri2023Incremental} prioritizes comparisons not just within the current data increment but globally across buffered profiles in order to spot duplicates arriving at different times. While this ensures \textit{globality}, it maintains and constantly updates complex global priority queues, which introduce a significant computational bottleneck as the buffer grows. 

Finally, Maciejewski et al.~\cite{Maciejewski2025DesignSpace} systematized the field with a comprehensive \textit{design space exploration}, proposing a unified framework of filtering, weighting, scheduling, and matching. Despite evaluating novel combinations like pre-trained language models, their exploration reaffirmed that scheduling strategies remain bound by the super-linear complexity of deterministic sorting, thus identifying a scalability wall, where memory-intensive join workflows fail to process large datasets.


\section{Problem Formulation}
\label{sec:problem}

Let $R$ and $S$ be two distinct collections of entity profiles. We model the resolution space as a Bipartite Similarity Graph $G = (R \cup S, \mathcal{E}, \mathcal{W})$, where:
\begin{itemize}[leftmargin=*]
\item $R$ and $S$ are disjoint sets of vertices ($V = R \cup S$).
\item $\mathcal{E} \subseteq R \times S$ is the set of edges (i.e., candidate pairs) identified by the blocking step.
\item $\mathcal{W}$ is the set of edge weights, where each $w_{(r,s)} \in [0, 1]$ indicates the matching likelihood between profiles $r \in R$ and $s \in S$.

\end{itemize}

In this context, we formulate the task we examine 
as follows:
\begin{Problem}[Scalable Utility Maximization]
\label{problem1}
Given a budget $B$ and a computational constraint of linear time complexity $\mathcal{O}(|\mathcal{E}|)$, find a subset of pairs $\mathcal{S}^* \subseteq \mathcal{E}$ with cardinality $|\mathcal{S}^*| \le B$ that maximizes the sum of weights: 
\begin{equation}\mathcal{S}^* = \underset{\mathcal{S} \subseteq \mathcal{E}, |\mathcal{S}| \le B}{\arg\max} \sum_{(r,s) \in \mathcal{S}} w_{(r,s)}\end{equation}
\end{Problem}

Note that this definition is generic enough to cover both Record Linkage, where $R$ and $S$ are individually duplicate-free, but overlapping datasets, and Deduplication, where the input comprises a single dataset $R \cup S$ with duplicates in itself. Note also that this task is independent of matching, yielding a static processing order that can be combined with any matching algorithm from the literature.

\section{Approach} 
\label{sec:sper}
To address Problem \ref{problem1}, satisfying its strict linearity constraint, we propose \textit{the Stochastic Progressive Entity Resolution framework} (\textbf{SPER}), which inherently overcomes the scalability limitations of deterministic sorting in high-velocity ER tasks by relaxing the deterministic requirement of finding the \textit{exact} top-$B$ pairs. Unlike the existing progressive methods that typically rely on sorting and ranking, SPER operates as a continuous, probabilistic filter. It applies a \emph{Stochastic Bipartite Maximization} strategy that targets a stochastic relaxation of $\mathcal{S}^*$ by treating edge selection as a sequence of independent Bernoulli trials. By assigning selection probabilities proportional to similarity weights, this approach replaces the global sorting operator with a local sampling filter,
reducing the time complexity of the initialization phase of Progressive ER to $\mathcal{O}(|\mathcal{E}|)$, while concentrating the expected utility on high-weight candidates. As a result, high-value matches are statistically more likely to be processed early in the stream, satisfying the core requirement of Progressive~ER.

To this end, SPER embeds the records of $R$ into dense vectors using an embedding model and then, it stores these embeddings in an Approximate Nearest Neighbor Search (ANNS) index capable of returning the top-$k$ neighbors in logarithmic query time. More specifically, 
SPER involves three phases:

\begin{enumerate}[leftmargin=*]
    \item \textbf{Retrieval:} For each entity from $S$, SPER embeds it into a dense vector that is then posed as a query to the ANNS index, retrieving a set of candidate matches from $R$. This generates a local, unranked bipartite subgraph for each new entity (i.e., query). 
    
    \item \textbf{Stochastic Prioritization:} Instead of buffering and sorting these candidates to find the best match, a process of super-linear complexity, SPER applies a Stochastic Bipartite Maximization strategy. It evaluates each candidate pair independently, assigning a selection probability proportional to its similarity weight, defined as the inner product of their L2-normalized embeddings. A lightweight Bernoulli trial (coin flip) determines if the pair is retained or discarded. Each selected pair is added to the set $\mathcal{S}'$ and is subsequently evaluated by a bi-encoder matching function~\cite{sudowoodoo, ember, lsblock, zeakis2023}.
    
    \item \textbf{Budget-Aware Execution:} To respect the global computational budget $B$ without centralized coordination, the system employs a dynamic scaling factor $\alpha$. This factor modulates the selection probabilities, ensuring that the aggregate number of sampled candidate pairs converges to the target budget in expectation, regardless of the number of entities in $S$ or the similarity distribution. 
\end{enumerate}

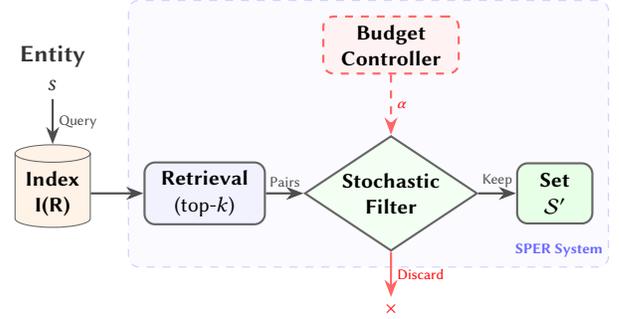
\begin{figure}[t]
\centering
\begin{tikzpicture}[
    node distance=0.6cm and 0.7cm, 
    >=Stealth,
    font=\sffamily\small,
    process/.style={rectangle, draw=black!60, fill=blue!5, thick, minimum height=0.8cm, minimum width=1.6cm, align=center, rounded corners},
    database/.style={cylinder, draw=black!60, shape border rotate=90, fill=orange!10, aspect=0.25, minimum height=1.0cm, minimum width=1.0cm, align=center},
    decision/.style={diamond, draw=black!60, fill=green!5, thick, aspect=1.5, align=center, inner sep=1pt, minimum width=1.5cm},
    data/.style={rectangle, draw=none, fill=none, align=center, text=black!80, font=\sffamily\bfseries},
    control/.style={rectangle, draw=red!60, fill=red!5, dashed, thick, align=center, rounded corners, minimum height=0.7cm, minimum width=1.8cm},
    arrow/.style={->, thick, color=black!70},
    line/.style={-, thick, color=black!70}
]


\node[database] (index) {\textbf{Index}\\\textbf{I(R)}};

\node[data, above=0.6cm of index] (stream) {Entity\\$s$};

\node[process, right=of index] (retrieval) {\textbf{Retrieval}\\(top-$k$)};

\node[decision, right=0.5cm of retrieval] (filter) {\textbf{Stochastic}\\\textbf{Filter}};

\node[control, above=0.8cm of filter] (budget) {\textbf{Budget}\\\textbf{Controller}};

\node[process, right=0.5cm of filter, fill=green!10, minimum width=1.0cm] (output) {\textbf{Set}\\$\mathcal{S}'$};


\draw[arrow] (stream) -- node[right, scale=0.7] {Query} (index);

\draw[arrow] (index) -- node[above, scale=0.6] {} (retrieval);

\draw[arrow] (retrieval) -- node[above, scale=0.7, align=center] {Pairs} (filter);

\draw[arrow] (filter) -- node[above, scale=0.7] {Keep} (output);

\coordinate[below=0.6cm of filter] (trash);
\draw[arrow, color=red!60] (filter) -- node[right, scale=0.7, color=red] {Discard} (trash) node[below, scale=0.8, color=red] {$\times$};

\draw[arrow, color=red!60, dashed] (budget) -- node[right, scale=0.7, color=red] {$\alpha$} (filter);

\begin{scope}[on background layer]
    \node[draw=blue!20, fill=blue!2, dashed, 
    fit=(retrieval) (filter) (budget) (output), 
    inner sep=0.2cm, 
    rounded corners,
    label={[anchor=south east, color=blue!60, scale=0.7, yshift=0.1cm]south east:\textbf{SPER System}}] (box) {};
\end{scope}

\end{tikzpicture}
\caption{The \textbf{SPER} framework. Entities from $S$ query the Index, generating candidates for the Stochastic Filter, which dynamically accepts/rejects pairs based on the budget controller.}
\label{fig:sper_architecture}
\end{figure}

Figure \ref{fig:sper_architecture} illustrates the high-level architecture of the SPER framework. All entities from $S$, embedded into dense vectors, are matched against the index $I(R)$, which stores the embeddings of $R$ and returns the top-$k$ results per query, yielding a total of $n=k\cdot|S|$ candidate pairs. The core innovation is the \textit{Stochastic Filter}, represented by the decision diamond, which replaces the traditional blocking priority queue found in deterministic progressive approaches. Instead of buffering and ranking the candidate pairs with $\mathcal{O}(n \log n)$ complexity, the filter makes an instantaneous $\mathcal{O}(1)$ decision for each pair. This design ensures that the system maintains a consistent verification throughput aligned with budget $B$, effectively decoupling the processing latency from the volume of input data.

A natural alternative to stochastic selection is a deterministic policy that retains candidate pairs exceeding a fixed similarity threshold (e.g., $0.8$). While simple, this approach is suboptimal for high-velocity Progressive ER for three reasons:

\begin{enumerate}[leftmargin=*]
    \item \textbf{Latency Indeterminacy:} Strictly selecting the top-$B$ pairs requires observing the entire candidate set to establish a ranking, forcing a batch-processing model with $\mathcal{O}(n \log n)$ sorting costs, where $n$ is the total number of candidate pairs (i.e., $n=k\cdot|S|$). Stochastic sampling approximates the utility of this optimal selection in $\mathcal{O}(n)$ time.
    
    \item \textbf{Budget Rigidity:} A static threshold cannot adapt to data variance. In high-similarity candidates, it may select excessive pairs (violating budget $B$), while in low-similarity candidates, it may starve the verification process. 
    
    \item \textbf{Recall of Ambiguous Matches:} Deterministic thresholds impose a hard cutoff, permanently discarding valid matches that fall slightly below them due to noise (e.g., typos). Stochastic selection maintains a non-zero probability $P$ of selecting lower-confidence pairs, enabling the recovery of subtle duplicates that rigid filtering would miss.
\end{enumerate}

\subsection{Stochastic Bipartite Maximization}
\label{sec:theory}
To approximate the optimal set $\mathcal{S}^*$ without incurring the sorting cost, we propose a \textit{Stochastic Bipartite Maximization} strategy. Crucially, our approach bypasses the construction of any physical graph structure; instead, we treat the selection of each retrieved pair $(r,s)$ as an independent Bernoulli trial. We retain the notation $w_{(r,s)}$ to denote the similarity score (or weight) of a candidate pair, defining a sampling probability $P[X_{(r,s)} = 1]$ that is directly proportional to this weight. Let $\mathcal{S}^*$ be the optimal set of $B$ pairs with the maximum total weight, while the total utility of a set $\mathcal{S}$ is defined as $U(\mathcal{S}) = \sum_{(r,s) \in \mathcal{S}} w_{(r,s)}$. In the following, we prove that our stochastic process generates a random solution set $\mathcal{S}' = \{(r,s) \in \mathcal{E} \mid X_{(r,s)} = 1\}$, which captures high-utility pairs with high probability, approximating the optimal objective function in linear time $\mathcal{O}(|\mathcal{E}|)$.

Treating $B$ as a target average, allows the algorithm to process slightly more or fewer pairs than its specified value per query entity, simplifying the implementation by removing the need to rigorously normalize probabilities to hit an exact count. In a standard Bernoulli process where $P[X_{(r,s)} = 1] = w_{(r,s)}$, the expected number of selected pairs $\mathbb{E}[|S'|] = \sum_{(r,s)} w_{(r,s)}$ depends solely on the data distribution, potentially leading to budget overflow $\mathbb{E}[|S'|] \gg B$.

To address this, we select a candidate pair \((r,s)\) independently with probability: \[ p_{(r,s)} \;=\; \alpha\ \cdot w_{(r,s)}, \]
where the global scaling factor \(0<\alpha \leq 1\) is chosen so that:
\begin{equation}
\label{eq:alpha-constraint}
\sum_{(r,s)} p_{(r,s)} \;=\;  B\implies \alpha = \frac{B}{\sum w_{(r,s)}}.
\end{equation}

We first establish the target budget $B$ as a fixed fraction $\rho$ of the total expected candidate volume, such that $B = \rho \cdot k|S|$. Ideally, the scaling factor $\alpha$ would be set to satisfy the constraint $\sum p_{(r,s)} = B$ exactly; however, computing this optimum is time-consuming for large-scale input datasets.

Instead, SPER initializes $\alpha$ using a conservative estimate derived directly from the budget definition: $\alpha \approx B / (0.5 \cdot k|S|) = 2\rho$, where $0.5$ serves as a safety prior for the average similarity weight. This initialization ensures that our approach begins in a state of controlled under-utilization and ramps up, avoiding an initial overflow that would require drastic correction. Subsequently, to maintain this budget dynamically, SPER employs an online adaptive calibration where candidate pairs are processed in windows of size $W$. After each window, we compare the observed number of selections $m_w$ to a small fixed target $B_w = \left\lceil B \cdot \frac{W}{|S|} \right\rceil$ and update $\alpha$ multiplicatively:

\begin{equation}
\label{eq:alpha}
\alpha_{\text{new}} = \alpha_{\text{old}}  \Bigl(1 + \eta   
\frac{B_{\mathrm{w}} - m_{\mathrm{w}}}{B_{\mathrm{w}}}\Bigr),
\end{equation}
where \(\eta\in(0,1]\) is a small adaptation rate, e.g., $\eta=0.05$. This controller requires no knowledge of the total number of candidate pairs and stabilizes \(\alpha\) quickly. 

While the choice of $\alpha$ establishes the baseline selection probability, the window size $W$ dictates the stability of the control loop. To prevent signal starvation—where a window yields zero selected candidates, causing controller oscillation—the window size must satisfy $W \gg 1 / \rho$.\footnote{The theoretical lower bound $W \ge 1/\rho$ ensures an expected selection count of at least one. However, due to stochastic variance, a tighter practical bound (e.g., $W \ge 5/\rho$) is required to ensure the probability of an empty window remains negligible ($<1\%$).} Beyond stability, extreme values for $\alpha$ can also compromise the controller's precision regarding budget adherence. Specifically, an excessively low $\alpha$ tends to dampen the selection efficiency (potentially dropping to $\approx$$B/2$), while an excessively high $\alpha$ amplifies the variance of the selection process, leading to significant overshoots (e.g., up to $2 \cdot B$).

Having fixed the selection probabilities, let $m$ denote the random number of candidate pairs selected by the algorithm. Modeling the selection process as a sequence of independent Bernoulli trials\footnote{Given that the probabilities vary for each candidate pair, the random variable $m$ follows a Poisson Binomial Distribution.}, each with parameter $p_{(r,s)}$, its expectation and variance are:
\[
\mathbb{E}[m] \;=\; \sum_{(r,s)} p_{(r,s)} \;=\; B,
\qquad
\mathrm{Var}[m] \;=\; \sum_{(r,s)} p_{(r,s)}(1-p_{(r,s)}).
\]
Since \(0\le p_{(r,s)}\le 1\), the following simple upper bound holds:
\[
\mathrm{Var}[m] \le \sum_{(r,s)} p_{(r,s)} = B,
\]
so the standard deviation of $m$ satisfies \(\sigma(m)\le \sqrt{B}\) and the relative standard deviation obeys \(\sigma(m)/B \le 1/\sqrt{B}\).
Thus, for large budgets the random fluctuations are relatively small.

To rigorously quantify this stability, Chernoff bounds for sums of independent Bernoulli variables (with mean \(\mu=\mathbb{E}[m]=B\)) yield, for \(0<\epsilon\le 1\),
\begin{equation}
\Pr\big(|m-B|\ge \epsilon B\big) \le 2\exp\!\left(-\frac{\epsilon^2\cdot B}{3}\right).
\label{eq:chernoff-budget}
\end{equation}
Equation~\eqref{eq:chernoff-budget} implies exponentially small tail probabilities once \(B\) is moderately large (e.g., for \(B=10{,}000\) and \(\epsilon=0.05\), the right-hand side expression is below \(10^{-3}\)).

Given these concentration guarantees, we now analyze the underlying optimization objective implicitly solved by the stochastic selection process. Rather than deterministically solving Problem~\ref{problem1}, SPER optimizes a stochastic relaxation in which the budget constraint is enforced in expectation and candidate pairs are selected probabilistically.

\begin{theorem}[Expected Utility under Stochastic Budgeted Sampling]
\label{thm:utility}
Let $\mathcal{S}'$ be the random set of pairs selected independently with
$P[X_{(r,s)}=1] = \alpha\, w_{(r,s)}$, where $\alpha$ is calibrated so that
$\mathbb{E}[|\mathcal{S}'|]=B$.
Then the expected utility of $\mathcal{S}'$ satisfies:
\begin{equation}
    \mathbb{E}[U(\mathcal{S}')] = \alpha \sum_{(r,s)} w_{(r,s)}^2 .
\end{equation}
This objective favors high-weight pairs and increasingly concentrates utility on top-ranked candidates as the similarity distribution becomes heavy-tailed.\footnote{Since the expected utility scales with the second moment of the weights, high-similarity pairs contribute quadratically more to the objective than low-similarity ones. This non-linear scaling concentrates the selection probability on the rare, high-weight pairs.}
\end{theorem}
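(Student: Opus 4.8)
The plan is to reduce the claim to a single application of linearity of expectation, which sidesteps any need to reason directly about the Poisson binomial structure of $\mathcal{S}'$. First I would rewrite the utility of the random set as a weighted sum of selection indicators over the \emph{entire} edge set,
\begin{equation}
U(\mathcal{S}') = \sum_{(r,s)\in\mathcal{E}} w_{(r,s)}\, X_{(r,s)},
\end{equation}
where $X_{(r,s)}\in\{0,1\}$ is the Bernoulli indicator of the event that $(r,s)$ is retained. This is just the definition $U(\mathcal{S}')=\sum_{(r,s)\in\mathcal{S}'} w_{(r,s)}$ rewritten with the convention that discarded pairs contribute $0$, so no information is lost.

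Next I would take expectations term by term. Because expectation is linear regardless of any dependence among the $X_{(r,s)}$, this gives $\mathbb{E}[U(\mathcal{S}')] = \sum_{(r,s)} w_{(r,s)}\,\mathbb{E}[X_{(r,s)}]$, and since each $X_{(r,s)}$ is Bernoulli, $\mathbb{E}[X_{(r,s)}] = P[X_{(r,s)}=1] = \alpha\, w_{(r,s)}$. Substituting and pulling the constant $\alpha$ out of the sum yields $\mathbb{E}[U(\mathcal{S}')] = \alpha\sum_{(r,s)} w_{(r,s)}^2$, which is the stated identity. I would emphasize that the calibration hypothesis $\mathbb{E}[|\mathcal{S}'|]=B$ is precisely the constraint $\sum_{(r,s)} \alpha w_{(r,s)} = B$ of Eq.~\eqref{eq:alpha-constraint}: it pins down the numerical value of $\alpha$ but plays no further role in the expected-utility computation, which relies only on the per-edge expectation $\mathbb{E}[X_{(r,s)}]=\alpha w_{(r,s)}$.

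For the interpretive rider---that the objective favors high-weight pairs and concentrates utility on top-ranked candidates as the weight distribution becomes heavy-tailed---I would argue by contrasting the second-moment objective $\alpha\sum w_{(r,s)}^2$ against the first-moment quantity $\sum w_{(r,s)}$ that a weight-agnostic sampler would target. The contribution of a pair, $\alpha w_{(r,s)}^2$, divided by its weight $w_{(r,s)}$, equals $\alpha w_{(r,s)}$, so each unit of weight is rewarded in proportion to the weight itself; equivalently, the ratio $\sum w_{(r,s)}^2 / \sum w_{(r,s)}$ is a weight-weighted average of the $w_{(r,s)}$ that grows with the dispersion of the distribution. I would present this as a short quantitative remark rather than a formal lemma, since it is a consequence of the identity rather than part of the equality being proved.

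The computation itself presents no genuine obstacle: the equality is a one-line consequence of linearity of expectation. The only points demanding care are conceptual rather than technical, and I would flag them explicitly to keep the argument self-contained: first, that independence of the trials is \emph{irrelevant} to the expectation and matters only for the concentration guarantees already established in Eq.~\eqref{eq:chernoff-budget}; and second, that feasibility requires $\alpha w_{(r,s)}\in[0,1]$, i.e.\ $\alpha\le 1$, which follows from the calibration $\alpha = B/\sum w_{(r,s)}$ together with $w_{(r,s)}\in[0,1]$ whenever $B\le\sum w_{(r,s)}$.
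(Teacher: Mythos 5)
Your proof is correct and takes essentially the same route as the paper's: both reduce the identity to linearity of expectation over the Bernoulli indicators $X_{(r,s)}$ with $\mathbb{E}[X_{(r,s)}]=\alpha w_{(r,s)}$, so the equality $\mathbb{E}[U(\mathcal{S}')]=\alpha\sum_{(r,s)} w_{(r,s)}^2$ follows in one line. The only divergence is in the interpretive remark, where the paper invokes the Cauchy--Schwarz bound $\sum w^2 \ge (\sum w)^2/(k\cdot|S|)$ to contrast with uniform sampling while you argue via the weight-weighted average $\sum w^2/\sum w$ growing with dispersion; your added observations (that independence is irrelevant to the expectation, and that feasibility needs $\alpha w_{(r,s)}\le 1$) are sound refinements rather than a different proof.
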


\begin{proof}
Let $X_{(r,s)}$ be the indicator variable for selecting pair $(r,s)$. The total utility is $U(\mathcal{S}') = \sum X_{(r,s)} w_{(r,s)}$.
Since SPER sets selection probability $P[X_{(r,s)}=1] = \alpha \cdot w_{(r,s)}$, where $\alpha$ is the scaling factor calibrated to satisfy the budget constraint $\alpha \sum w_{(r,s)} \approx B$, the expected utility is:
\begin{equation}
    \mathbb{E}[U(\mathcal{S}')] = \sum_{(r,s)} \alpha \cdot w_{(r,s)} \cdot w_{(r,s)} = \alpha \sum_{(r,s)} w_{(r,s)}^2
\end{equation}
By the Cauchy-Schwarz bound, $\sum w^2 \geq (\sum w)^2 / (k\cdot |S|)$, implying that emphasizing $w^2$ concentrates utility more strongly on high-weight pairs than uniform sampling, which weights all candidates equally. The algorithm thus optimizes the expected utility under the stochastic budget constraint, acting as a high-pass filter that suppresses low-confidence pairs, while preserving high-similarity candidates.
\end{proof}

Problem~\ref{problem1} requires a deterministic selection of the exact top-$B$ pairs, which in turn necessitates global ranking and sorting.
SPER instead targets a stochastic relaxation of this objective, in which the budget constraint is satisfied in expectation and prioritization is achieved probabilistically. Theorem~\ref{thm:utility} shows that this relaxation maximizes the expected utility in proportion to the second moment of the similarity distribution. As a result, $\mathcal{S}'$ forms a concentrated subset of high-similarity candidates that approximates the top-$B$ solution in practice, while avoiding the super-linear complexity of deterministic scheduling.
This approximation becomes increasingly accurate in ER settings, where true matches typically exhibit a heavy-tailed similarity distribution.

\begin{algorithm}[t]
\caption{Stochastic Bipartite Maximization with Dynamic Budgeting}
\label{alg:stochastic_res}
\begin{algorithmic}[1]
\State \textbf{Input:} Dataset $S$, indexed dataset $R$ as $I$, neighbors $k$, window size $W$, selection percentage $\rho$, embedding model $\mathcal{T}$
\State \textbf{Output:} Selected Pairs $\mathcal{S}'$

\State $\mathcal{S}' \leftarrow \emptyset, m_w \leftarrow 0, \text{count} \leftarrow 0$, $\alpha \leftarrow 2\cdot \rho$  \label{alg:init1}
\State $B \leftarrow \rho \cdot k \cdot |S|$, $B_w = \left\lceil B \cdot \frac{W}{|S|} \right\rceil$, $\eta \leftarrow 0.05$  \label{alg:init2}

\For{each entity $s \in S$}  \label{alg:q1} 
    \State $v \leftarrow \mathcal{T}(s)$ \Comment{Entity $s$ embedded into a dense vector}    
    \State $\mathcal{C}_s \leftarrow I.\text{query}(v, k)$ \Comment{Retrieve top-$k$ candidate Id's of $R$} \label{alg:q2}
    
    \For{each $(r, w) \in \mathcal{C}_s$} \label{alg:sel1}
        \State $P \leftarrow \alpha \cdot  w$ \Comment{Calculate probability of selection by scaling the similarity score}
        \State $u \sim \text{Uniform}(0, 1)$ \label{alg:u1}
        
        \If{$u < P$}    \label{alg:u2}
            \State $\mathcal{S}'.\text{add}((r,s))$ \Comment{Add tuple $(r,s)$ to $\mathcal{S}'$}
            \State $m_w \leftarrow m_w + 1$ \Comment{Track selections}
        \EndIf
    \EndFor \label{alg:sel2}
    
    \State $\text{count} \leftarrow \text{count} + 1$ \label{alg:budget1}
    
    \If{$\text{count} \mod W == 0$} \Comment{End of window}
        \State $\alpha_{\text{new}} \leftarrow \alpha_{\text{old}}   (1 + \eta   \frac{B_w - m_w}{B_w})$ \Comment{Update scaling factor $\alpha$}
        \State $m_w \leftarrow 0$ \Comment{Reset window counter}
    \EndIf \label{alg:budget2}
\EndFor

\State \Return $\mathcal{S}'$
\end{algorithmic}
\end{algorithm}

\subsection{The SPER Algorithm}

Our overall approach is outlined in Algorithm \ref{alg:stochastic_res}, which processes the entities of $S$ in windows of size $W$, while maintaining a running count of selected pairs $m_w$ to adjust $\alpha$ after each window. The algorithm consists of the following steps:

\begin{enumerate}[leftmargin=*]
    \item \textbf{Initialization (Lines \ref{alg:init1} and \ref{alg:init2}):} The process begins by setting the scaling factor to $\alpha = 2 \cdot \rho$. We also instantiate the tracking counters ($count, m_w$), the budget parameters ($B, B_w$) and assign a conservative value of $\eta = 0.05$ to the adaptation rate, ensuring the control loop prioritizes stability and effectively smooths out short-term stochastic variance.
    
    \item \textbf{Retrieval (Lines \ref{alg:q1}--\ref{alg:q2}):} For each query entity $s$ in $S$, SPER embeds it into a dense vector using embedding model $\mathcal{T}$ and queries the index $I$ to retrieve the set $\mathcal{C}_s$ of the Id's of its top-$k$ nearest candidates of $R$.\footnote{Embedding and retrieval are batched operations.} 
    
    \item \textbf{Stochastic Selection (Lines \ref{alg:sel1}--\ref{alg:sel2}):} For each candidate $r \in \mathcal{C}_s$, the selection probability is calculated as $P = \alpha \cdot  w$. A Bernoulli trial determines if the pair $(r,s)$ is added to the output set~$\mathcal{S}'$. 
    
    \item \textbf{Dynamic Budget Control (Lines \ref{alg:budget1}--\ref{alg:budget2}):} To enforce the budget constraint, the algorithm monitors the selection rate of candidate pairs. After processing a window of $W$ candidate pairs, it adapts $\alpha$ using Equation \ref{eq:alpha}. This feedback loop stabilizes the selection pressure, increasing $\alpha$ if the system is under-budget and decreasing it if being over-budget.
\end{enumerate}

The total runtime is composed of two phases: (1) the retrieval phase, where for every entity in $S$, querying the index takes logarithmic time with respect to the index size $|R|$. Across all queries, this sums to $\mathcal{O}(|S| \cdot \log |R|)$, and (2) the selection phase, where for each of the $|S|$ queries, the algorithm processes $k$ candidates. The stochastic check at Lines \ref{alg:u1} and \ref{alg:u2} is a constant time operation $\mathcal{O}(1)$. The total selection effort is thus $\mathcal{O}(k\cdot|S|)$. Therefore, the combined time complexity is $\mathcal{O}(|S| \cdot \log|R| + k\cdot|S|)$.

Since the total number of candidate edges is $|\mathcal{E}| = k \cdot |S|$, our selection phase scales linearly as $\mathcal{O}(|\mathcal{E}|)$, strictly dominating the super-linear complexity, $\mathcal{O}(|\mathcal{E}| \log |\mathcal{E}|)$, that is required by sorting-based approaches.

\begin{table}[t]
    \centering
    \caption{Characteristics of the 8 benchmark datasets. $|M|$ represents the number of true matching pairs.}
    \label{tab:datasets}
    \vspace{-5pt}
    \resizebox{\columnwidth}{!}{%
    \begin{tabular}{llrrr}
        \toprule
        \textbf{Dataset} & \textbf{Domain} & \textbf{$|S|$} & \textbf{$|R|$} & \textbf{$|M|$} \\
        \midrule
        \textbf{Abt-Buy}\footnotemark[1] & E-Commerce & 1,081 & 1,092 & 1,097 \\
        \textbf{Amazon-Google}\footnotemark[1]  & E-Commerce & 1,363 & 3,226 & 1,300 \\
        \textbf{DBLP-ACM}\footnotemark[1]  & Bibliographic & 2,294 &  2,614& 2,224 \\
        \textbf{DBLP-Scholar}\footnotemark[1] & Bibliographic & 2,616 & 64,263 & 5,347 \\
        \textbf{Walmart-Amazon}\footnotemark[2] & E-Commerce & 2,554 & 22,074 & 1,154 \\
        \textbf{DBPEDIA-IMDB}\footnotemark[3] & Movies & 23,182 & 27,614 & 22,862 \\
        \textbf{NC-Voters}\footnotemark[4] (Semi-synthetic) & Civic & 1M & 1M & 1M \\
        \textbf{DBLP}\footnotemark[5] (Semi-synthetic) &Bibliographic & 3M & 3M & 1.5M \\
        \bottomrule
    \end{tabular}%
    }
\end{table}
\footnotetext[1]{\url{https://old.dbs.uni-leipzig.de/research/projects/object_matching/benchmark_datasets_for_entity_resolution}}
\footnotetext[2]{\url{https://hpi.de/naumann/projects/repeatability/datasets/amazon-walmart-dataset.html}}
\footnotetext[3]{\url{https://zenodo.org/record/8433873/files/data_ea.tar.gz}} 
\footnotetext[4]{\url{https://www.ncsbe.gov/results-data/voter-registration-data}}
\footnotetext[5]{\url{https://dblp.org/xml}}

\begin{figure*}[t]
    \centering
    
    \begin{subfigure}[b]{1.0\textwidth}
        \centering
        \includegraphics[width=\linewidth]{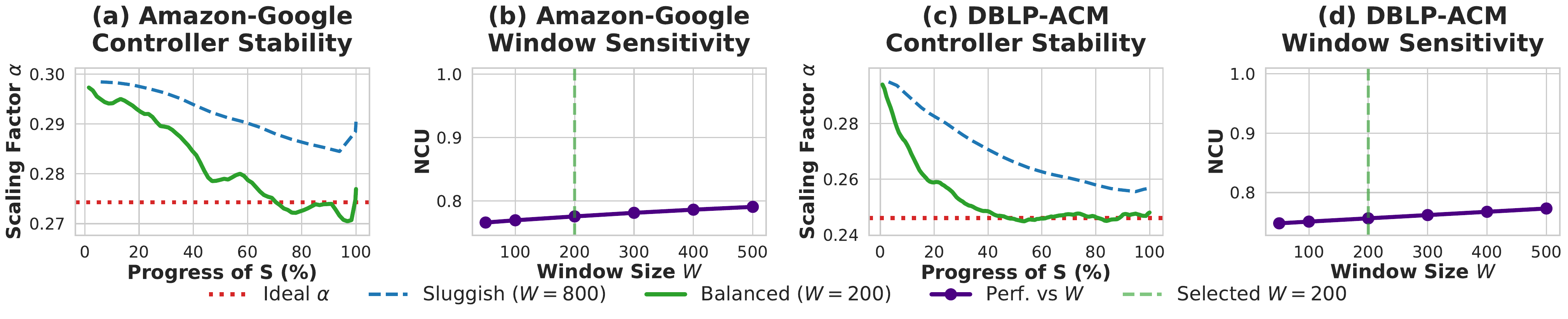}
        \label{fig:sens1}
    \end{subfigure}
    \par\bigskip 
    
    \begin{subfigure}[b]{1.0\textwidth}
        \centering
        \includegraphics[width=\linewidth]{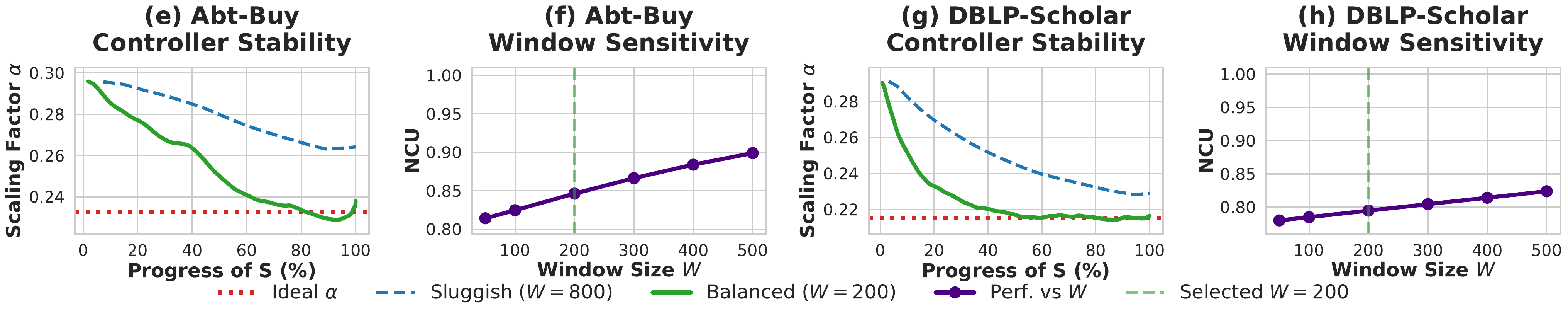}
        \label{fig:sens2}
    \end{subfigure}
    \par\bigskip
    
    \begin{subfigure}[b]{1.0\textwidth}
        \centering
        \includegraphics[width=\linewidth]{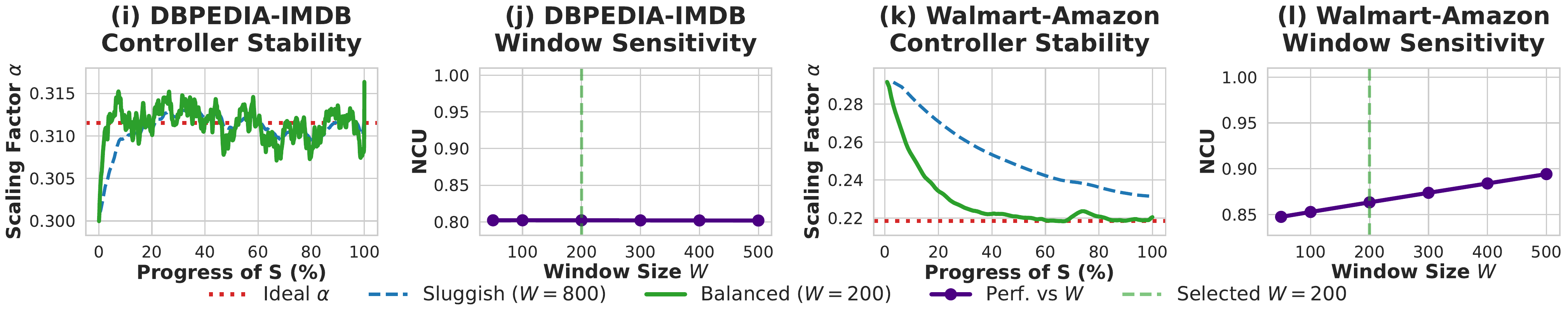}
        \label{fig:sens3}
    \end{subfigure}
    \par\bigskip
    
    \begin{subfigure}[b]{1.0\textwidth}
        \centering
        \includegraphics[width=\linewidth]{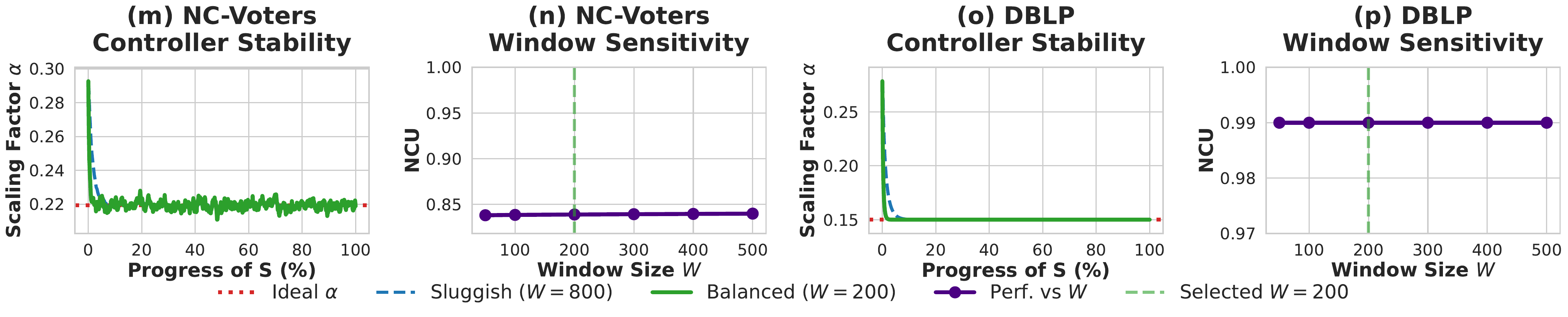}
        \label{fig:sens4}
    \end{subfigure}
    
    \caption{Parameter Stability and Sensitivity Analysis.}
    \label{fig:sensitivity}
\end{figure*}

Given that Algorithm \ref{alg:stochastic_res} processes $S$ sequentially and makes immediate inclusion decisions, it never materializes the full list of candidates. The working memory requirement is $\mathcal{O}(k)$ (effectively $\mathcal{O}(1)$ relative to $|S|$), whereas the existing, sorting-based Progressive ER methods require $\mathcal{O}(k\cdot|S|)$ space to store and rank all these pairs before further processing occurs.

\section{Experimental Evaluation}
\label{sec:evaluation}
In this section, we provide a comprehensive empirical evaluation of the SPER framework, assessing its performance against the three main state-of-the-art progressive ER techniques. Our experimental analysis is driven by three primary objectives: (1) to validate the scalability of Stochastic Bipartite Maximization, demonstrating its ability to process large datasets in linear time; (2) to verify the stability of the dynamic budget controller, ensuring it strictly adheres to computational constraints across diverse data distributions; (3) to demonstrate the superiority of SPER over the baseline methods with respect to both effectiveness and time efficiency.

To rigorously evaluate the efficiency of the prioritization strategy in isolation—independently of any subsequent matching function—we employ Recall@$B$ (\textit{Cumulative Recall}) and Precision. These metrics measure the proportion of ground-truth matches retrieved and the density of valid pairs, respectively, within the specific budget $B$ of candidates selected by the filter. To further quantify how closely our stochastic selection lies from the theoretical optimum, we report \textit{Normalized Cumulative Utility} (\textbf{NCU}), which is defined as the ratio of the total similarity weight of the selected pairs to that of the ideal top-$B$ subset identified by an offline oracle. Finally, \textit{Execution Time} is recorded to validate the framework's strict linear scalability against super-linear baselines. 

Table~\ref{tab:datasets} summarizes the characteristics of the eight benchmark datasets employed in our evaluation, which are commonly used in the literature~\cite{Simonini2019SchemaAgnostic, Gazzarri2023Incremental, ondemand, kar-bigdata2021, gazzari1, araujo1,kar-bigdata2020, kar-pvldb, tkde2024, gsm}.

To evaluate SPER, we compare it against three state-of-the-art \textit{progressive baselines} that employ distinct prioritization strategies for managing the trade-off between efficiency and result quality: (1) The \textit{I-PES algorithm} (\textbf{PES}) \cite{Gazzarri2023Incremental} operates in an entity-centric that
maximizes early quality by adaptively scheduling comparisons based on the match likelihood.
(2) \textit{pBlocking} (\textbf{PBL}) \cite{Galhotra2021Efficient} 
, combined with a perfect matcher, iteratively refines blocking via a feedback loop, using partial ER results to dynamically rescore blocks and prune non-matching pairs. (3) \textit{BrewER} (\textbf{BRW}) \cite{ondemand} implements a query-driven approach, prioritizing the resolution of entities based on specific SQL ORDER BY predicates to support top-$k$ retrieval and early termination without processing the full dataset.

All experiments were conducted on a machine equipped with $80$ GB of RAM and an NVIDIA GPU with $23$ GB of VRAM. We implemented SPER using the Hierarchical Navigable Small Worlds (HNSW) ANNS index~\cite{hnsw}, offered by FAISS\footnote{\url{https://github.com/facebookresearch/faiss}}, which uses highly-optimized operations for both index construction and logarithmic query time. We employ the $384$-dimensional \texttt{MiniLM-L6-v2} embedding model~\cite{mini}, due to its optimal trade-off between inference latency and representation quality~\cite{zeakis2023, lsblock}. We first embed dataset $R$ via a one-time, GPU-accelerated batch operation, followed by the successive embedding of query entities from $S$, where we have set $k=5$ throughout the whole evaluation process. To ensure the reliability of the experiments, the presented results are the average values from $10$ experimental runs.

\subsection{Experimental results}

Figures~\ref{fig:sensitivity}(a), \ref{fig:sensitivity}(c), \ref{fig:sensitivity}(e), \ref{fig:sensitivity}(g), \ref{fig:sensitivity}(i), \ref{fig:sensitivity}(k), \ref{fig:sensitivity}(m), and \ref{fig:sensitivity}(o) compare the trajectory of the scaling factor $\alpha$ against the ideal value of $\alpha$ (red dotted line), which would perfectly calibrate the filter's strictness so that the total number of selected pairs would equal budget $B$ exactly. For these experiments, we fix the target budget ratio at $\rho=0.15$, resulting in an initial scaling factor estimate of $\alpha \approx 0.3$. For Amazon-Google, the \textit{balanced} controller (green solid line, $W=200$) successfully tracks the ideal threshold ($\alpha \approx 0.275$), rapidly correcting initial estimates while maintaining necessary reactivity. In contrast, the \textit{sluggish} baseline ($W=800$, blue dashed line) acts as an excessive low-pass filter. Despite using a reactive adaptation rate ($\eta=0.05$), the large window introduces structural inertia that prevents the controller from adapting to the true density in time. This lag leaves the estimation significantly offset from the optimal operating point during density transitions. A similar pattern is observed on DBLP-ACM, where the balanced controller quickly adjusts to the lower required $\alpha \approx 0.25$, whereas the sluggish baseline fails to converge over the course of processing $S$.

To validate our parameter selection, the sensitivity plots (Figures~\ref{fig:sensitivity}(b), \ref{fig:sensitivity}(d), \ref{fig:sensitivity}(f), \ref{fig:sensitivity}(h), \ref{fig:sensitivity}(j), \ref{fig:sensitivity}(l), \ref{fig:sensitivity}(n), and \ref{fig:sensitivity}(p)) measure the impact of $W$ on the NCU\footnote{The normalization scales this value by dividing it by $U(\mathcal{S}^*)$, the optimal utility for the budget, allowing for a percentage-based comparison (0 to 1.0).}, focusing on the critical operational range $W \in [100, 500]$. We observe that NCU follows a stable high-performance plateau (near $0.8$) across this interval. The results confirm that our choice of $W=200$ (marked by the vertical green line) sits safely within this optimal region, avoiding the noise of smaller windows, while preserving the agility required to track dynamic shifts.

\begin{figure}[t]
    \centering
    \begin{subfigure}[b]{0.48\textwidth}
        \centering
        \includegraphics[width=\linewidth]{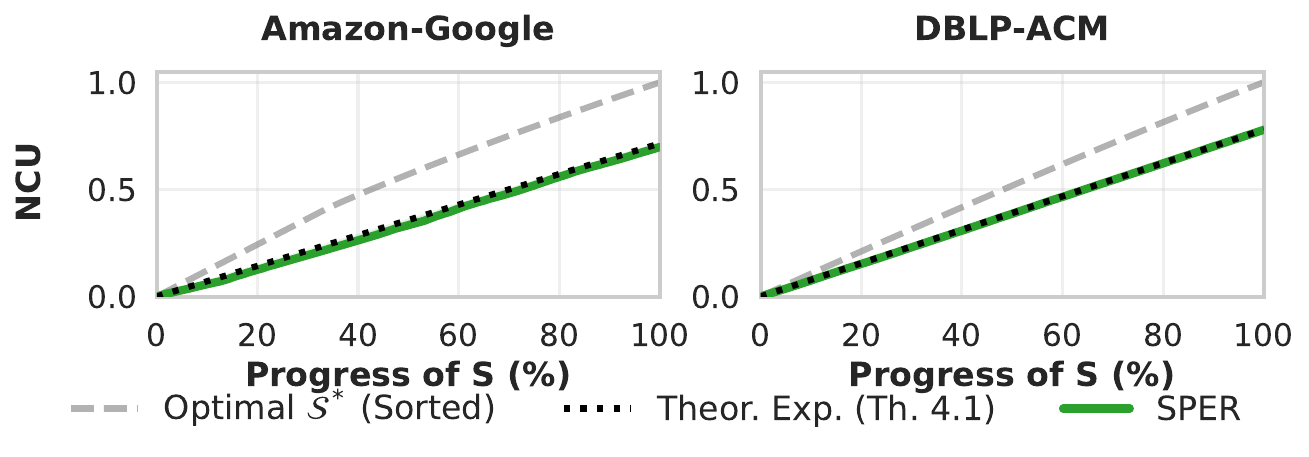}
        \label{fig:theo1}
    \end{subfigure}
    \hfill 
    \begin{subfigure}[b]{0.48\textwidth}
        \centering
        \includegraphics[width=\linewidth]{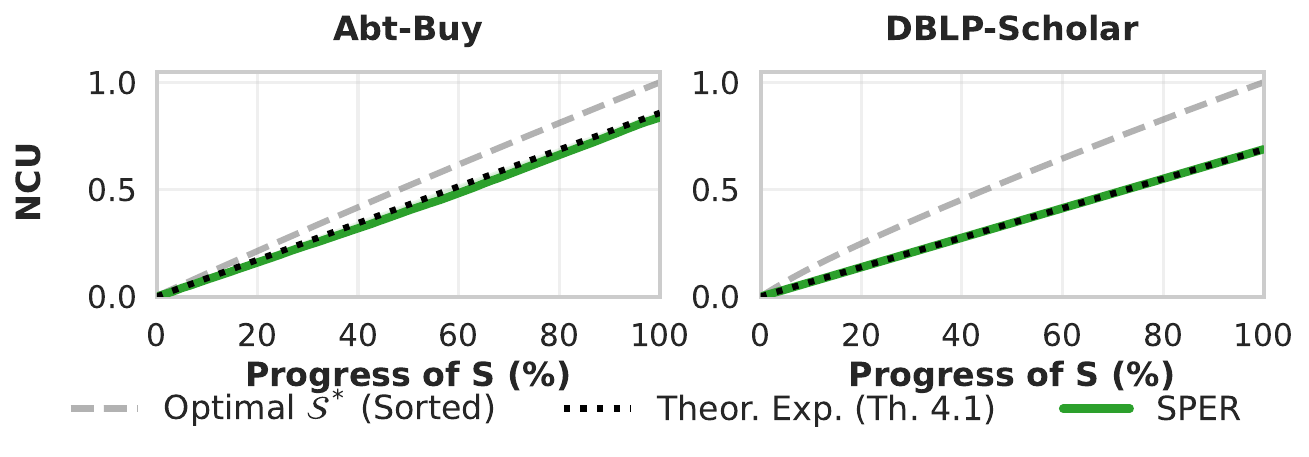}
        \label{fig:theo2}
    \end{subfigure}
    
    \par\bigskip 
    
    \begin{subfigure}[b]{0.48\textwidth}
        \centering
        \includegraphics[width=\linewidth]{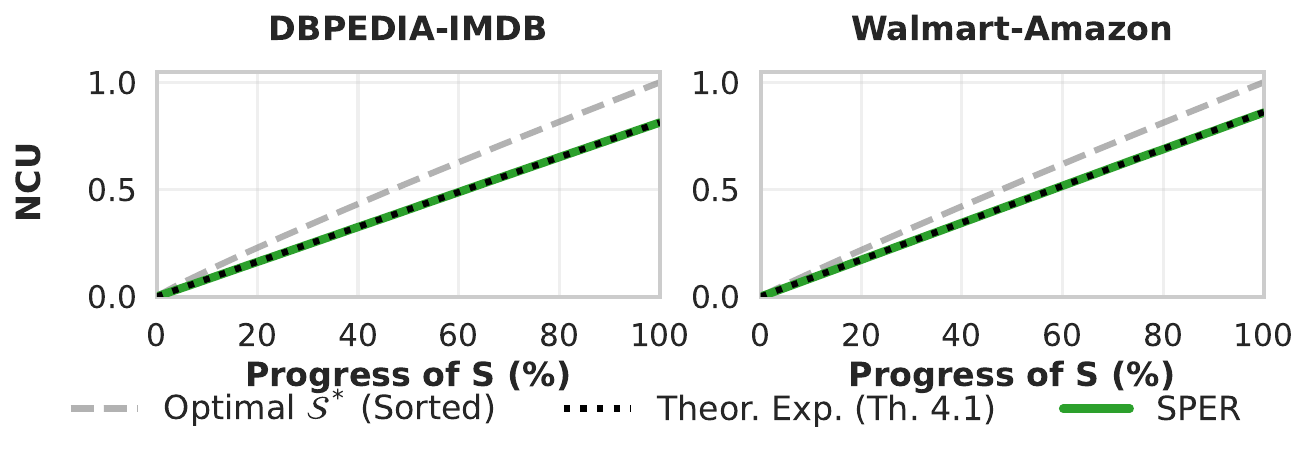}
        \label{fig:theo3}
    \end{subfigure}
    \hfill
    \begin{subfigure}[b]{0.48\textwidth}
        \centering
        \includegraphics[width=\linewidth]{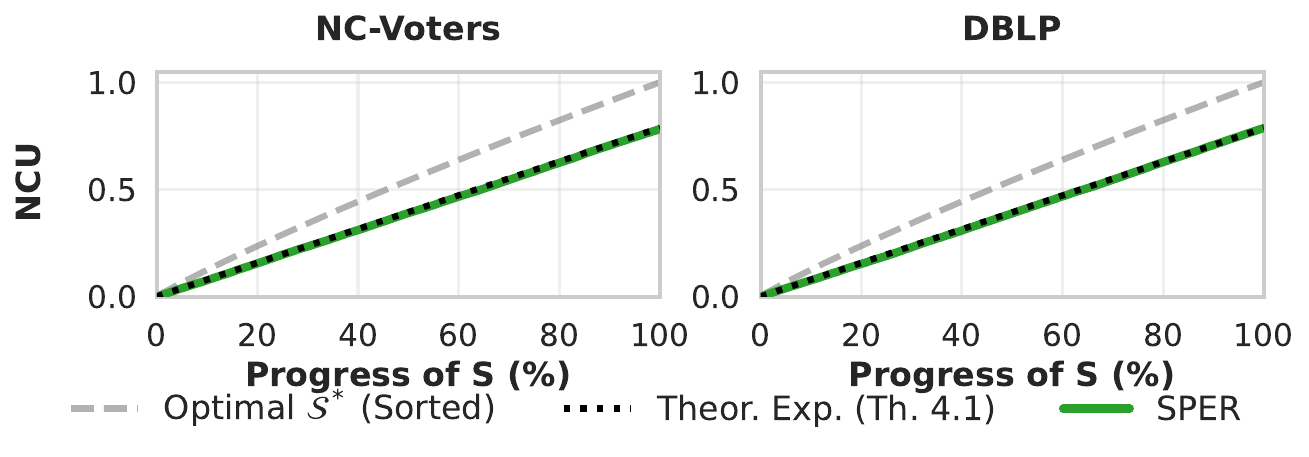}
        \label{fig:theo4}
    \end{subfigure}
    
    \caption{Comparison of SPER's cumulative utility against the theoretical expected utility model (Theorem~\ref{thm:utility}).}
    \label{fig:theorem_validation}
\end{figure}

\begin{figure*}[t]
    \centering
    \includegraphics[width=\textwidth]{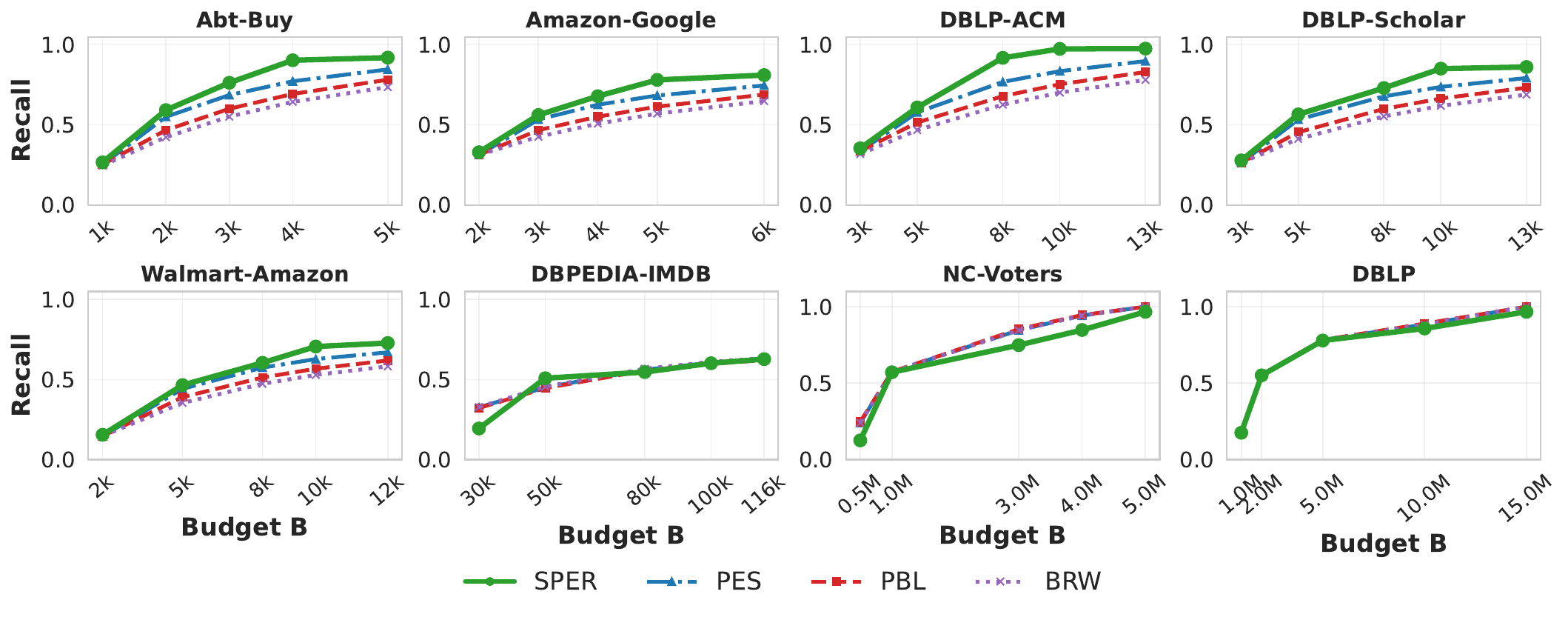}
    \caption{Cumulative Recall Analysis. The curves illustrate the recall achieved relative to the processing budget.}
    \label{fig:recall}
\end{figure*}

\begin{figure*}[t]
    \centering
    \includegraphics[width=\textwidth]{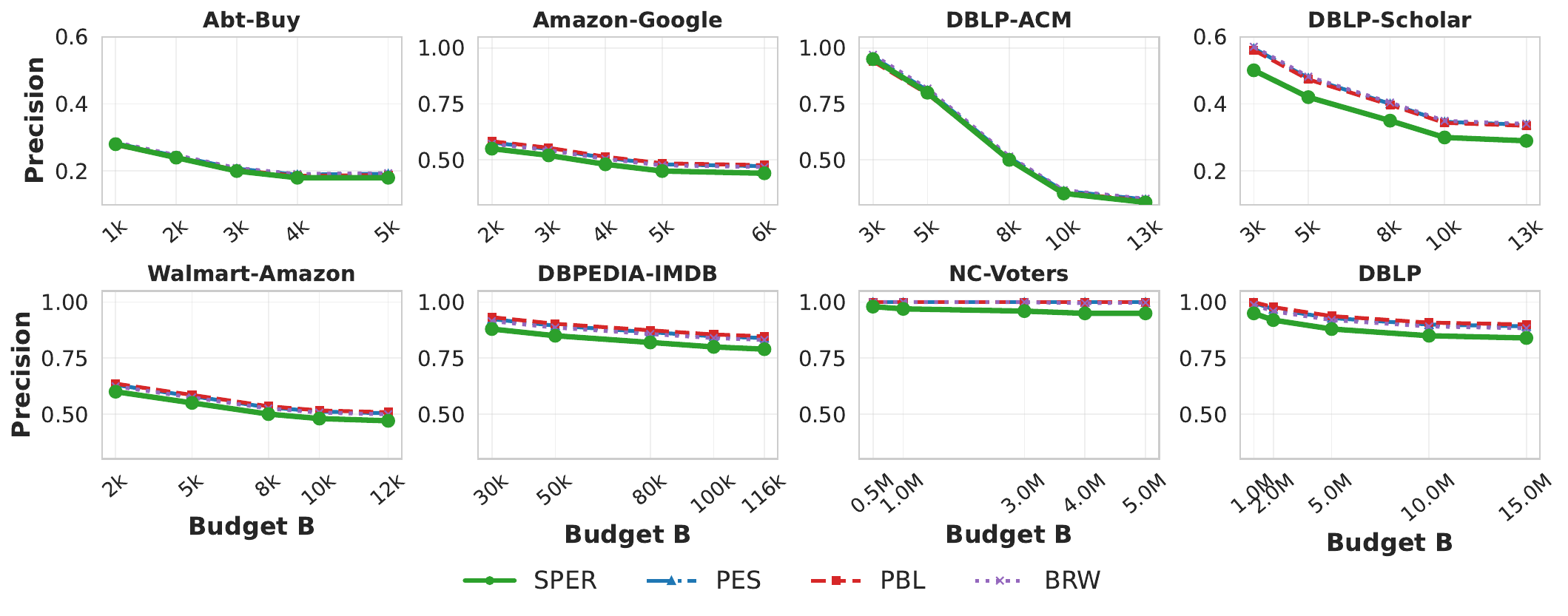}
    \caption{Precision analysis. The curves illustrate the precision achieved relative to the processing budget.} 
    \label{fig:precision}
\end{figure*}

\begin{figure*}[t]
    \centering
    \includegraphics[width=\textwidth]{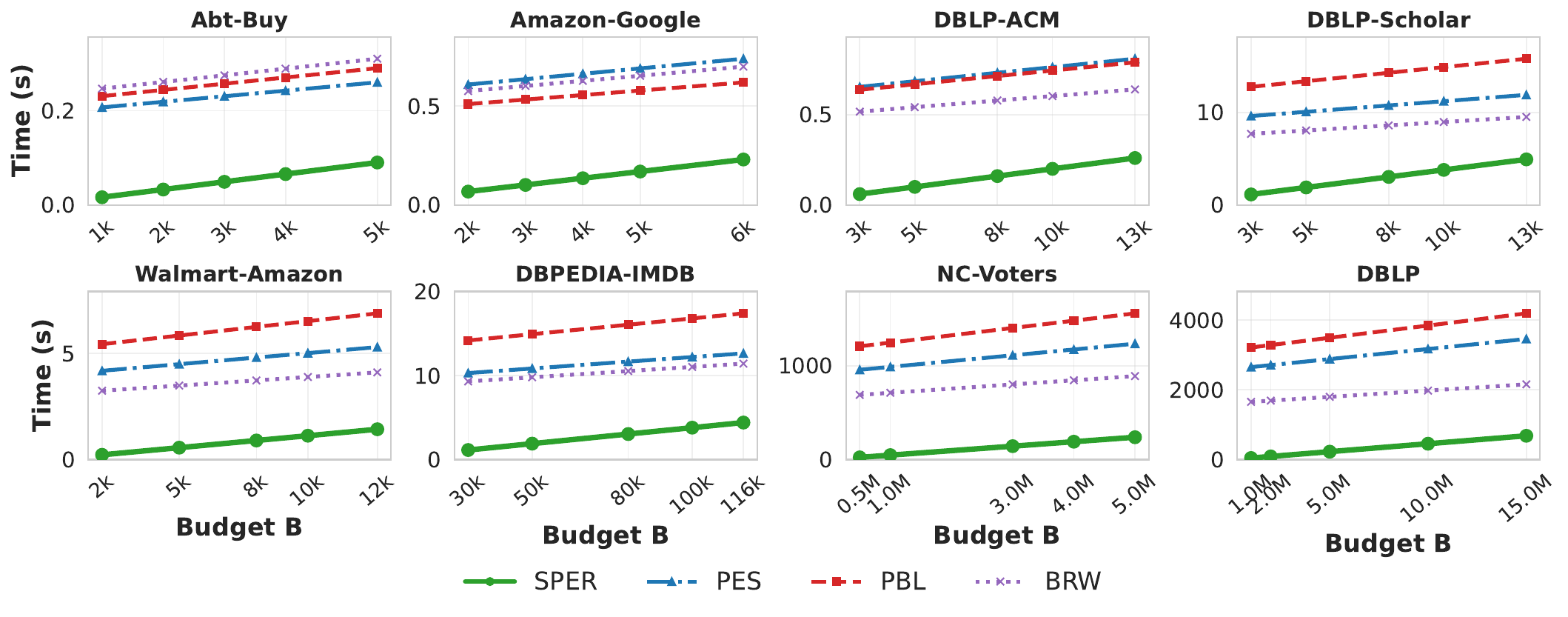}
    \caption{Comparison of the execution times (in seconds) required to prioritize the candidate space relative to the processing budget.} 
    \label{fig:time}
\end{figure*}
To bridge the gap between theoretical analysis and practical performance, we empirically validate the expected utility model established in Theorem~\ref{thm:utility} by plotting the NCU against the budget $B$. As illustrated in Figure~\ref{fig:theorem_validation}, we compare SPER against two reference baselines: the \textit{Optimal $\mathcal{S}^*$} (gray dashed line), a computationally expensive offline oracle that sorts the entire candidate set to strictly select the top-$B$ pairs, and the theoretical expectation (black dotted line), which projects the expected utility derived from the second moment of the similarity distribution ($\mathbb{E}[U] = \alpha \sum w_{(r,s)}^2$). The SPER controller (green solid line) closely tracks the theoretical trajectory in both datasets,
with slight positive deviations often observed due to the dynamic controller's ability to adapt $\alpha$ locally. This strong alignment confirms that the Stochastic Filter operates as predicted: rather than acting as a random sampler, it functions as a high-pass utility filter.

\begin{figure*}[t]
    \centering
    \includegraphics[width=\textwidth]{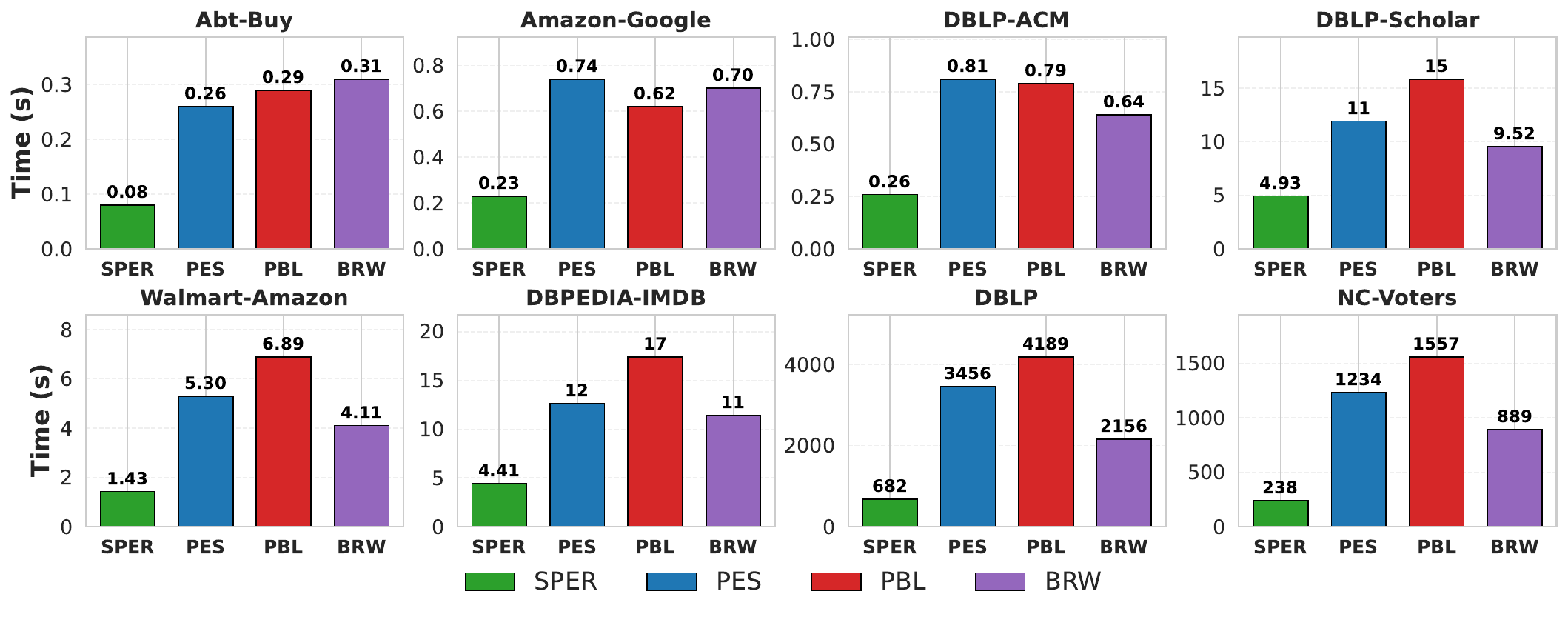}
    \caption{Comparison of the total execution times (in seconds).} 
    \label{fig:total_time}
\end{figure*}

Figure~\ref{fig:recall} demonstrates that SPER and the baselines deliver almost comparable recall for the smallest budgets across all benchmarks. We report $B$ in absolute terms, representing the direct output determined by the corresponding relative factor $\rho$. A steeper curve indicates superior \textit{progressiveness}, as it signifies that a higher percentage of true matches is identified earlier in the emission process. However, as the budget increases, SPER consistently outperforms its competitors on the more complex datasets, showing significant average improvements on Abt-Buy ($12\%$), Amazon-Google ($9\%$), DBLP-ACM ($13\%$), DBLP-Scholar ($10\%$), and Walmart-Amazon ($7\%$). These performance gains stem from SPER's use of semantic embeddings, which allow it to link entities that are lexically distinct but semantically identical (e.g., entities with \textit{PVLDB} vs. \textit{Proceedings of the VLDB Endowment}). In contrast, the three baselines rely on the token overlap, which limits their recall on these semantically heterogeneous datasets. Conversely, on NC-Voters and DBLP, where the differences between matching entities arise primarily from synthetic lexical perturbations, the baselines slightly outperform SPER by $5\%$ and $1\%$, on average, respectively. We also plot the \textit{sorted baseline using embeddings}, which prioritizes candidates strictly by their semantic similarity scores in descending order. This serves as an empirical benchmark, demonstrating that SPER's probabilistic sampling sacrifices negligible effectiveness compared to the computationally expensive deterministic optimal approach.

Despite the theoretical advantage of the baselines, which rely on deterministic sorting to strictly rank candidates, SPER achieves comparable precision levels across all benchmarks (Figure~\ref{fig:precision}). Specifically, on datasets such as Abt-Buy and DBLP-ACM, SPER yields precision scores that are statistically equivalent to the exhaustive sorting methods (e.g., $0.18$ vs. $0.17$--$0.21$ for the baselines). Even on the structurally complex DBPEDIA-IMDB and DBLP-Scholar datasets, SPER maintains a competitive approximation quality. For instance, on DBLP-Scholar, it achieves a precision of $0.29$—trailing the exhaustive baselines by an average margin of only $\approx 13\%$—while successfully retrieving the majority of high-confidence matches.

The most critical advantage of SPER is revealed in the analysis of execution time in Figure~\ref{fig:time}. Across all eight benchmarks, SPER consistently achieves the lowest latency, delivering speedups ranging from 3$\times$ to over 6$\times$. On the small datasets, SPER operates nearly instantaneously, validating its design as a zero-initialization filter. For instance, on Abt-Buy, SPER completes prioritization in just $0.08$ seconds. In contrast, the baselines require significantly longer times (from $0.26$ to $0.31$ seconds) to construct inverted indices and perform initial sorting in order to deliver results. This translates to a massive relative speedup, with SPER performing over $6\times$ faster than BRW on this dataset. Similarly, on Walmart-Amazon, SPER ($1.43$ seconds) eliminates the \textit{cold start} latency entirely, outperforming PBL ($15.89$ seconds) by approximately $5\times$. On the NC-Voters, SPER ($\sim$4 minutes) 
is roughly 5$\times$ faster than PES ($\sim$20.5 minutes)
and 6.5$\times$ faster than PBL ($\sim$26 minutes).
PES relies on a deterministic dynamic buffer for prioritization, which incurs significant index maintenance overhead. A similar trend is observed on the DBLP dataset, where SPER finishes in roughly 11 minutes, 
effectively reducing the runtime by a factor of $3$ compared to the fastest baseline (BRW at $\sim$36 minutes) and outperforming the slowest one, PBL
($\sim$69 minutes), by 6$\times$. BRW is so much slower, because it adaptively prioritizes blocks with respect to a query entity, but the comparisons within each block are executed deterministically, which renders the run-time sensitive to the block size skew. Likewise, PBL prioritizes the candidate pairs by emitting blocks of increasing size, under the assumption that earlier blocks contain more promising matches. However, progressiveness is enforced at the block level: once a block is selected, all contained record pairs are deterministically compared. As a result, PBL lacks fine-grained control over the comparison budget, which thus may incur bursty costs due to large blocks. Figure~\ref{fig:total_time} details the total execution times required by each method to process the maximum allocated budget.

While SPER shows minor budget deviations on small datasets due to granularity, this error margin becomes negligible as the dataset size $|S|$ grows (e.g., $<1\%$ overshoot on DBPEDIA-IMDB, DBLP and NC-Voters). This validates the method's concentration guarantees quantified by Inequality \ref{eq:chernoff-budget}.

\section{Conclusions and Future Work}
\label{sec:conclusions}
In this work, we presented SPER, a high-velocity framework that resolves the scalability-utility trade-off in Progressive ER. By abandoning the computationally expensive guarantee of deterministic sorting in favor of stochastic bipartite maximization, SPER successfully emulates the optimal processing order without the associated initialization overhead. Our theoretical analysis and experimental results confirm that this probabilistic approach acts as an effective high-pass filter, concentrating utility on high-confidence matches while ensuring strict adherence to computational budgets. Ultimately, SPER demonstrates that for modern, web-scale data streams, stochastic approximation is not merely a compromise but a necessary evolution to achieve real-time resolution with high fidelity.

In future work, we plan to deepen the streaming capabilities of SPER in two key directions. First, we will extend the framework to support evolving target indices, allowing the reference dataset $R$ to be updated incrementally in real-time. This would enable the system to handle truly unbounded streams, rather than querying a static index. Second, we aim to enhance the budget controller's robustness to concept drift and bursty traffic by integrating lightweight time-series forecasting. This would allow the system to preemptively adjust the window parameters and scaling factor during sudden spikes in data volume or shifts in similarity distributions. 
\balance
\bibliographystyle{ACM-Reference-Format}
\bibliography{references}
\end{document}